\documentclass{article}

\usepackage{arxiv}

\usepackage[utf8]{inputenc} 
\usepackage[T1]{fontenc}    
\usepackage{hyperref}       
\usepackage{url}            
\usepackage{booktabs}       
\usepackage{amsfonts}       
\usepackage{nicefrac}       
\usepackage{microtype}      
\usepackage{lipsum}

\usepackage{cite}
\usepackage{amsmath,amssymb,amsfonts}
\usepackage{graphicx}
\usepackage{algorithm}
\usepackage[noend]{algpseudocode}
\usepackage{hyperref}
\usepackage{textcomp}
\usepackage{threeparttablex} 

\usepackage{mathtools}
\usepackage{caption}
\usepackage{float}
\usepackage{stmaryrd}
\usepackage{epstopdf}
\usepackage{multirow}
\usepackage{pifont}
\usepackage{caption}
\usepackage{subcaption}
\usepackage{xcolor}

\newcommand{\norm}[1]{\left\lVert#1\right\rVert}

\newcommand{\R}{{\mathbb{R}}}

\newcommand{\N}{{\mathbb{N}}}

\newcommand{\dt}{\delta t}

\newcommand{\X}{{\mathbf{X}}}

\newcommand{\M}{{\mathcal{M}}}
\newcommand{\T}{{\mathbf{T}}}

\newcommand{\Obs}{{\mathcal{U}}}

\newcommand{\U}{{\mathbf{U}}}

\newtheorem{theorem}{Theorem}[section]

\newtheorem{assumption}{Assumption}

\newtheorem{definition}[theorem]{Definition}
\newtheorem{lemma}[theorem]{Lemma}
\newtheorem{remark}[theorem]{Remark}
\newtheorem{problem}[theorem]{Problem}
\newenvironment{proof}{\par\noindent\textbf{Proof.} }{\hfill$\blacksquare$\par}

\title{Smooth Spatiotemporal Tube Synthesis for Prescribed-Time  Reach-Avoid-Stay Control
\thanks{ This work was supported in part by the ARTPARK. The work of Ratnangshu Das was supported by the Prime Minister’s Research Fellowship from the Ministry of Education, Government of India.}
}

\author{
 Siddhartha Upadhyay \\
  Robert Bosch Centre for Cyber-Physical Systems\\
  IISc, Bengaluru, India\\
  \texttt{siddharthau@iisc.ac.in} \\
   \And
    Ratnangshu Das \\
  Robert Bosch Centre for Cyber-Physical Systems\\
  IISc, Bengaluru, India\\
  \texttt{ratnangshud@iisc.ac.in} \\
  \And
 Pushpak Jagtap \\
  Robert Bosch Centre for Cyber-Physical Systems\\
  IISc, Bengaluru, India\\
  \texttt{pushpak@iisc.ac.in} \\
}


\begin{document}
\maketitle

\begin{abstract}

In this work, we address the issue of controller synthesis for a control-affine nonlinear system to meet prescribed time reach-avoid-stay specifications. Our goal is to improve upon previous methods based on spatiotemporal tubes (STTs) by eliminating the need for circumvent functions, which often lead to abrupt tube modifications and high control effort. We propose an adaptive framework that constructs smooth STTs around static unsafe sets, enabling continuous avoidance while guiding the system toward the target within the prescribed time. A closed-form, approximation-free control law is derived to ensure the system trajectory remains within the tube and satisfies the RAS task. The effectiveness of the proposed approach is demonstrated through a case study, showing a significant reduction in control effort compared to prior methods.

\end{abstract}
\keywords{ Reach-avoid-stay task, prescribed time, spatiotemporal tubes, constrained control, control effort.}
\section{Introduction}
The reach–avoid–stay property (RAS) serves as a fundamental component for defining more complex temporal logic \cite{jagtap2024controller}. It supports the development of robust control strategies in safety-critical systems by ensuring that the system's states remain within a safe region, reach a designated target set within a finite time, and remain inside that target set thereafter.

To synthesis controllers for RAS specifications, two broad approaches exist. The first involves abstraction of the state space, while the other works directly in the continuous domain without discretization. With the growing use of formal languages to express complex tasks, symbolic control methods have gained prominence, as highlighted in \cite{tabuada2009verification}. For example, there are several tools \cite{rungger2016scots,jagtap2017quest} that synthesize controllers for nonlinear systems based on formal specifications like RAS. It does this by converting the continuous state space into a finite symbolic model through discretization. Similarly, in \cite{li2020robustly}, the author proposes a sound and complete control synthesis method for discrete-time nonlinear systems to verify the existence of a reach-and-stay control strategy and to construct the strategy if it exists, using s fixed-point algorithm. In \cite{sundarsingh2023scalable}, the authors present a scalable controller synthesis approach using Control Barrier Function (CBFs). However, despite the efforts to improve computational complexity, these approaches continue to face the curse of dimensionality.

In contrast to the abstraction-based approach, some approaches avoid discretization entirely and use barrier-based control \cite{ames2016control}, relying on optimization to compute the control signal. In \cite{meng2021control}, the authors formulate control Lyapunov-barrier functions that provide sufficient conditions for designing state-dependent control signals with respect to the RAS specifications. A probabilistic extension of this idea is presented in \cite{meng2022sufficient}, where the author connects probabilistic safety and stability guarantees with RAS specifications. However, their reliance on optimization makes them computationally expensive and difficult to scale to high-dimensional systems.

Another approach is funnel-based control, which is efficient and well-suited for tracking tasks \cite{bechlioulis2008robust}. This method is more computationally efficient, and a comprehensive overview of its applications is provided in \cite{bu2023prescribed}. In spite of its advantage in addressing tracking and reachability tasks, solving convex specifications, such as avoiding unsafe sets, remains a challenge.
In \cite{das2023funnel}, the funnel-based approach is extended to address RAS specifications by integrating avoid constraints through a circumvent function and an adaptive funnel framework. However, this approach still faces challenges, such as requiring accurate system models, being sensitive to disturbances, having higher computational costs, and not guaranteeing prescribed-time performance. 

To address these issues, \cite{STT} introduces a spatiotemporal tube (STT) framework that replaces traditional funnel functions. This method captures reach and avoid specifications simultaneously using smooth, time-varying tube and circumvent functions, while also ensuring prescribed-time satisfaction. A closed-form, approximation-free control law is derived to keep the system trajectory within the tube, thus fulfilling the prescribed-time RAS objectives. \cite{faruqui2025reach} presents a multi-agent coordination framework that uses STTs for prescribed-time RAS tasks under unknown dynamics and bounded disturbances. However, these approaches gave closed-form solutions to RAS taks, the use of circumvent functions often leads to sharp changes, resulting in high control effort. The smoother tube design using a sampling-based technique in \cite{das2024spatiotemporal, das2025approximation} reduces this effort but requires expensive offline computation.

In this work, we propose an adaptive framework for designing STTs in real time that enables smooth and efficient handling of avoid constraints. Unlike the circumvent-based approach, which introduces abrupt changes in the tube geometry and results in high control effort, the proposed method generates smooth, continuous deformations around unsafe sets. This significantly reduces the control effort, while preserving the guarantees of prescribed-time reach-avoid-stay satisfaction. The effectiveness of the proposed method is demonstrated through simulation case studies. 

The remainder of the paper is organized as follows: Section~\ref{sec:prelim} presents the preliminaries and problem formulation. In Section~\ref{sec:tube}, we describe the proposed adaptive STT design. Section~\ref{sec:control} presents the control law. In Section~\ref{sec:case}, we provide the simulation results on an omnidirectional mobile robot and compare the control effort with the circumvent-based approach. Finally, Section~\ref{sec:conc} concludes the paper and discusses future directions.

\section{Preliminaries and Problem Formulation}
\label{sec:prelim}
\subsection{Notation}
We use symbols $\N$, $\N_0$, $ \R$, $\R^+$, and $\R_0^+ $ to denote the set of natural, whole, real, positive real, and nonnegative real numbers, respectively. 
The space of the real $n \times m$ matrices is denoted by $\R^{n\times m}$. An $n$-dimensional column vector is represented by $\R^{n}$.
A vector $x \in \mathbb{R}^{n}$ with entries $x_1, \ldots, x_n\in \R$ is written as $[x_1, \ldots, x_n]^\top$, where $x_i \in \mathbb{R}$ is the $i$-th entry, and $i \in [1;n]$. The Euclidean norm is denoted by $\norm{\cdot}$, and the transpose of a matrix $M \in \R^{n \times m}$ by $M^\top$. 
Given sets $\X_i$, $i\in\left[1;N\right]$ with $N \in \N$, we denote the Cartesian product of the sets by $\X=\prod_{i\in\left[1;N\right]}\X_i:=\{(x_1,\ldots,x_N)|x_i\in \X_i,i\in\left[1;N\right]\}$. 
Consider a set $\X_a\subset\R^n$, its projection on $i$th dimension, where $i\in[1;n]$, is given by an interval $[\underline \X_{ai},\overline \X_{ai}]\subset \R$, where $\underline \X_{ai}:=\min\{x_i\in\R\mid[x_1\ldots,x_n]\in \X_a\}$ and $\overline \X_{ai}:=\max\{x_i\in\R\mid[x_1,\ldots,x_n]\in \X_a\}$. We further define the hyper-rectangle $\llbracket \X_a \rrbracket = \prod_{i=[1;n]}{[\underline \X_{ai}, \overline \X_{ai}]}$.
We denote the empty set by $\emptyset$.
The space of bounded continuous functions is denoted by $\mathcal{C}$, and $\mathcal{C}^n$ denotes the set of functions that are $n$ times continuously differentiable.

\subsection{System Definition}
Consider the following control-affine nonlinear system:
\begin{align}
    \mathcal{S}: \dot{x} = f(x) + g(x)u + w, \label{eqn:sysdyn}
\end{align}
where $x(t) = [x_1(t), \ldots, x_n(t)]^{\top} \in \mathbb{R}^n$ and $u(t) \in \mathbb{R}^n$ are the state and control input vectors, respectively. $w(t) \in \mathbb{W} \subset \R^n$ denotes unknown bounded disturbance. 

\begin{assumption}\label{assum:lip}
    The functions $f: \R^n\rightarrow \R^n$ and $g: \R^n\rightarrow \R^{n \times n}$ in \eqref{eqn:sysdyn} are \textit{unknown} and locally Lipschitz.         
\end{assumption}
\begin{assumption}\label{assum:pd}{(\cite{xu2003robust,PPCfeedback})}
    The symmetric components of $g(x)$ denoted by $g_s(x) = \frac{g(x)+g(x)^{\top}}{2}$ are uniformly sign definite with known signs for all $x \in \X$. Without loss of generality, we assume $g_s(x)$ is uniformly positive definite, i.e., there exists a constant $\underline{g}\in\mathbb R^+$:
    $0 < \underline{g} \leq \lambda_{\min} (g_s(x)), \forall x \in \X,$
    where $\lambda_{\min}(\cdot)$ is the smallest eigenvalue of the matrix.
\end{assumption}
\subsection{Problem Formulation}
We model the unsafe set as a union of compact and convex sets, $\mathbf{U}= \bigcup_{j \in [1;n_u]} \Obs^{(j)} \subset \R^n$. Note that, in general, $\mathbf{U}$ can be nonconvex and disconnected. The compact connected sets $\mathbf{S} \subset \R^n \setminus \mathbf{U}$ and $\mathbf{T}\subset \R^n \setminus \mathbf{U}$ represent the initial and target sets, respectively. 
We consider a prescribed-time reach-avoid-stay problem defined next.

\begin{definition}[Prescribed-time RAS Task]\label{def:ras}
Given an unsafe set $\mathbf{U}$, along with an initial set $\mathbf{S} \subset \R^n \setminus \mathbf{U}$ and a target set $\mathbf{T} \subset \R^n \setminus \mathbf{U}$, the goal is to ensure that, for any initial state $x(0) \in \mathbf{S}$, there exists a time $t \in [0, t_c]$ such that $x(t) \in \mathbf{T}$, and the trajectory avoids the unsafe set for all times, i.e., $x(s) \in \R^n \setminus \mathbf{U}$ for all $s \in [0, t_c]$, where $t_c \in \mathbb{R}^+$ is the prescribed completion time.
\end{definition}

\begin{problem}\label{prob1}
 Given the system $\mathcal{S}$ in \eqref{eqn:sysdyn} satisfying assumptions \ref{assum:lip} and \ref{assum:pd}, design a \textit{closed-form} and \textit{approximation-free} controller to ensure the satisfaction of prescribed-time reach-avoid-stay specifications in Definition \ref{def:ras}.
\end{problem}

To address the Problem \ref{prob1}, we employ the STT framework as introduce in \cite{STT}, which defines a time-dependent region in the state space that ensures safety and maintains progress toward the goal over the entire time horizon.
\begin{definition}\label{tube_def}
Given a Prescribed-time RAS  specification in Definition \ref{def:ras}, a spatiotemporal tube (STT), $\Gamma(t)=\prod_{i\in\left[1;N\right]}[\gamma_{i,L}(t),\gamma_{i,U}(t)],\forall t \in \mathbb{R}_0^+$ is characterized by continuously differentiable time varying functions $\gamma_{i,L}:\mathbb{R}_0^+ \rightarrow \mathbb{R} \text{ and }\gamma_{i,U}:\mathbb{R}_0^+ \rightarrow \mathbb{R}, \forall i \in [1;n]$ if the following holds:
\begin{align}\label{eqn:stt}
    &\gamma_{i,L}(t)<\gamma_{i,U}(t),\forall i \in[1;n],t\in[0,t_c],\\
    &\Gamma(0) \subseteq \mathbf{S},\
    \Gamma(t_c) \subseteq \T,\
    \Gamma(t) \cap \mathbf{U}= \emptyset, \forall t \in [0,t_c]. \notag
\end{align}
\end{definition}

\section{Designing Spatiotemporal Tubes} \label{sec:tube}

Unlike the approach in \cite{STT}, which uses a circumvent function to sharply modify the tube around unsafe sets, our method constructs the STTs that smoothly adapts to avoid these sets while still satisfying the conditions in Definition~\ref{tube_def}. This leads to a smoother and more consistent tube design without abrupt changes.

\subsection{Reachability Tubes}

In this section, we introduce the concept of reachability tubes as in \cite{das2024spatiotemporal} that ensure the system trajectory, starting from any initial state $x(0) \in \R^n\backslash \U$, reaches the designated target set $\mathbf{T}$ within a specified finite time $t_c$. 

We begin by defining a hyperrectangle $\hat{\mathbf{S}}$ centered around the initial state $x(0)\in \operatorname{int}(\mathbf{S}) \subset \mathbf{S}$ as:
\begin{align}\label{eqn:initial reg}
    \hat{\mathbf{S}}:=\prod_{i=[1;n]}[x_i(0)-d_{i,S},x_i(0)+d_{i,S}]\subset \mathbf{S_0},
\end{align}
where $d_{S,i}\in \mathbb{R}^+$ represents the extent of the hyperrectangle $\hat{\mathbf{S}}$ in the $i$-th dimension. 

Similarly, a reference point $\eta=[\eta_1,...,\eta_n]\in \operatorname{int}(\mathbf{T})$ can be chosen, and based on this, a hyperrectangle $\hat{\mathbf{T}}\subset \mathbf{T}$ centered at $\eta$ is defined as:
\begin{align}
    \hat{\mathbf{T}}:=\prod_{i=[1;n]}[\eta_i-d_{i,T},\eta_i+d_{i,T}]\subset \mathbf{T},
\end{align}
where $d_{i,T}\in \mathbb{R}^+$ determines the size of $\hat{\mathbf{T}}$ in each dimension.

{
The reachability tube margin $\rho(t)=[\rho_{1}(t),....,\rho_{n}(t)]$ defines the lower bound of the tube along each dimension and evolves according to the following differential equation:
\begin{align}
\dot{\rho}_{i}(t) &=
\begin{cases}
\displaystyle t_c\frac{\hat{\underline{\mathbf{T}}}_i - \hat{\underline{\mathbf{S}}}_{i} }{(t_c - t)^2}  \operatorname{sech}^2\left( \frac{t}{t_c - t} \right),  &\text{if } t < t_c \\
0, &\text{if } t \geq t_c
\end{cases}.
\end{align}
}

\begin{figure*}
    \centering
    \includegraphics[width=0.9\linewidth]{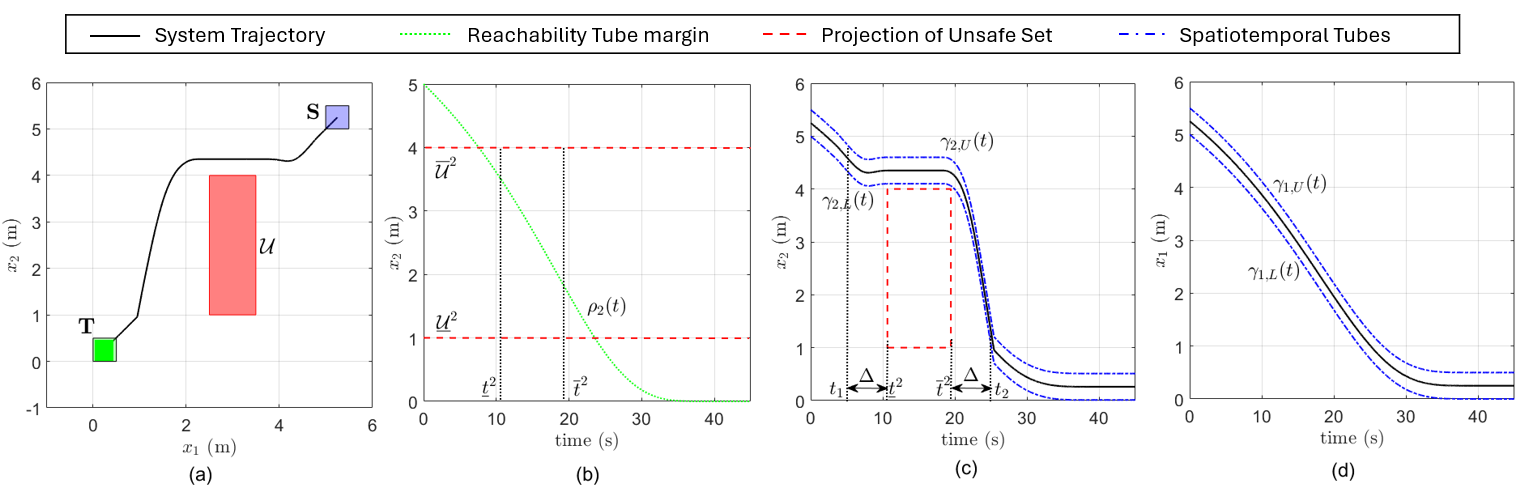}
    \caption{
An example of STT design for a two-dimensional system $x(t) = [x_1(t), x_2(t)]^\top$, assigned the task: “Starting from $\mathbf{S}$ (blue), reach $\mathbf{T}$ (green), while avoiding $\mathbf{U}$ (red), within $t_c = 45s$ ”. 
(a) State space with system trajectory (black dotted line). 
(b) Reachability tube margin along with the unsafe set projection in $x_2$. 
(c) STT adapted around the unsafe set projection in $x_2$. 
(d) STT in $x_1$.
}
    \label{fig:main}
\end{figure*}
\subsection{Avoid Specification}
Let the unsafe set be defined as \( \mathbf{U} = \bigcup_{j \in [1, n_u]} \mathcal{U}^{(j)} \subset \R^n \), which satisfies Assumption \ref{obs_assum}. In this formulation, instead of explicitly employing a circumvent function in an adaptive framework as in \cite{STT}, we enforce the avoidance specification by utilizing the time of intersection with the unsafe set and its corresponding projection $\llbracket \Obs^{(j)} \rrbracket = \prod_{i \in [1;n]} [\underline{\Obs}^{(j)}_i, \overline{\Obs}^{(j)}_i]$.

\begin{assumption}[\cite{STT}]\label{obs_assum}
    There exits at least one dimension $i\in[1;n]$ in which $[\underline{\hat{\mathbf{S}}}_{i},\hat{\overline{\mathbf{S}}}_{i}]\cap [\underline{\mathcal{U}}^{(j)}_{i},\overline{{\mathcal{U}}}^{(j)}_{i}] = \emptyset, \forall j\in[1;n_u]$. Similarly, there exists at least one dimension $i\in[1;n]$ in which $[\underline{\hat{\mathbf{T}}}_{i},\hat {\overline{{\mathbf{T}}}}_{i}]\cap [\underline{\mathcal{U}}^{(j)}_{i},\overline{{\mathcal{U}}}^{(j)}_{i}] = \emptyset $, $\forall j\in[1;n_u]$.
\end{assumption}

Next, we will use the formulation of reachability tube  to find the time of intersection $[\underline{t}^{(j)},\overline{t}^{(j)}]$ over which the tube starting from $\mathbf{S}$ to $\mathbf{T}$ intersects with the $j$th unsafe set $\mathcal{U}^{(j)}$ and is given by
\begin{align}
\underline{t}^{j} &= \left( \max_{i \in [1;n]} \min \bar{a}_i \right) t_c, \quad
\overline{t}^{j} = \left( \min_{i \in [1;n]} \max \bar{a}_i \right) t_c
\label{eq:time_bounds}
\end{align}
where $\bar{a}_i = \left\{ 
\frac{a^{j}_{1,i}}{1 + a^{j}_{1,i}},  
\frac{a^{j}_{2,i}}{1 + a^{j}_{2,i}},  
\frac{a^{j}_{3,i}}{1 + a^{j}_{3,i}},  
\frac{a^{j}_{4,i}}{1 + a^{j}_{4,i}} 
\right\}$ with $a^{j}_{1,i} = \tanh^{-1} \left( \frac{\underline{\mathcal{U}}^{j}_{i} - \hat{\underline{\mathbf{S}}}_{i}}{\hat{\underline{\mathbf{T}}}_{i} - \hat{\underline{\mathbf{S}}}_{i}} \right), a^{j}_{2,i} = \tanh^{-1} \left( \frac{\overline{\mathcal{U}}^{j}_{i} - \hat{\underline{\mathbf{S}}}_{i}}{\hat{\underline{\mathbf{T}}}_{i} - \hat{\underline{\mathbf{S}}}_{i}} \right),\\ a^{j}_{3,i} = \tanh^{-1} \left( \frac{\underline{\mathcal{U}}^{j}_{i} - \overline{\hat{\mathbf{S}}}_{i}}{\overline{\hat{\mathbf{T}}}_{i} - \overline{\hat{\mathbf{S}}_{i}}} \right)$, and $a^{j}_{4,i} = \tanh^{-1} \left( \frac{\overline{\mathcal{U}}^{j}_{i} - \overline{\hat{\mathbf{S}}}_{i}}{\overline{\hat{\mathbf{T}}}_{i} - \overline{\hat{\mathbf{S}}}_{i}} \right)$.

 In addition, it is worth noting that if the system trajectory enters the unsafe set $\mathcal{U}^{(j)}$, then $\exists t \in \mathbb{R}^+$ such that $x_i(t) \cap [\underline{\mathcal{U}}^{(j)},\overline{\mathcal{U}}^{(j)}]\neq \emptyset, \forall i \in [1;n]$. Thus, in order to satisfy the avoid specification, it is enough to modify the tube design in any one of the dimensions $i^{(j)}\in [1;n]$ given by:
 \begin{align}\label{eq:argmin_argmax}
i_1^{(j)} &= \operatorname{argmax}_{i \in [1;n]} \min \overline{a}_i, \quad
i_2^{(j)} = \operatorname{argmin}_{i \in [1;n]} \max \overline{a}_i, \nonumber \\
i^{(j)} &= \operatorname{argmin}_{k \in \{i_1^{(j)}, i_2^{(j)}\}} \left( \max \overline{a}_k - \min \overline{a}_k \right).
\end{align}

\subsection{Modifying the Tube}
In this section, we outline the approach that will later be extended into Algorithm 1 for modifying the STT to guarantee adherence to the avoidance specification.
For each unsafe set $\mathcal{U}^{(j)}$, we assume that the corresponding time interval of intersection, denoted by $[\underline{t}^{(j)}, \overline{t}^{(j)}]$, has already been identified as previously described. We define $t_1^{(j)} = \underline{t}^{(j)} - \Delta$ and $t_2^{(j)} = \overline{t}^{(j)} + \Delta$, where $\Delta \in \mathbb{R}^+$ represents a margin around the intersection interval.
{
\begin{assumption}\label{ass:unsafe_temp}
    All unsafe sets \(\mathcal{U}^{(j)} \in \U\) are temporally well-separated, i.e., for all $j\in[1;n_u]$ and for all $m\in[1;n_u]\setminus \{j\}$
     \begin{align}
        \min(|\underline{t}^{(j)}-\overline{t}^{(m)}|,|\overline{t}^{(j)}-\underline{t}^{(m)}|) > 2\Delta.
    \end{align}
\end{assumption}
}
{
\begin{remark}
Assumption~\ref{ass:unsafe_temp} is typically satisfied by task design. Specifically, if the prescribed time to reach the target is long enough, the unsafe sets become more spread out in time, making it easier to maintain the required buffer $2\Delta$ and ensure safe movement. 
\end{remark}
}
As discussed earlier, the tube can be modified along any dimension $i \in [1, n]$, and we refer to the chosen dimension for the $j^\text{th}$ unsafe set as $k = i^{(j)}$. Consequently, the task simplifies to adjusting the tube along the selected dimension so that it steers clear of the circumvent function, which was previously defined using a step function. To achieve this, we propose the following adaptive framework:

\begin{align}
&\dot\gamma_{i,L}(t)=
\begin{cases}
\dot\rho_{i}(t), & \text{if } i \neq k \\
\alpha_1^{(j)} \dot\rho_{i}(t) + \alpha_2^{(j)} \phi_{1}^{(j)}(t) + \alpha_3^{(j)} \phi_2^{(j)}(t), & \text{if } i = k.
\end{cases}\label{eqn:gamma_U}
\end{align}


In \eqref{eqn:gamma_U}, the weighting terms, which are a function of time, are defined using $s(t)=0.5\tanh \left( \frac{t}{v} \right),$ with  $v \in \mathbb{R}^+_0$, as:
\begin{align}
    \alpha_1^{(j)} & = s(t)-s(t-t_1^{(j)}+\dt)+s(t-t_2^{(j)}-\dt)+0.5,\nonumber\\ 
    \alpha_2^{(j)} & = s(t - t_1^{(j)}+\dt) - s(t - \underline{t}^{(j)}-\dt),\\
    \alpha_3^{(j)} & = s(t - \overline{t}^{(j)}+\dt)-s(t-t_2^{(j)}-\dt),\nonumber
\end{align}
where $\dt \in \R^+$, with $\dt < \Delta$, is a small buffer introduced to compensate for the smooth $\tanh$ based approximation in $s(t)$. 
The adaptive function $\phi_{1}^{(j)}(t)$ is defined as:
\[
\phi_{1}^{(j)}(t) := \frac{h^{(j)}_{1}(t) - \gamma_{i,L}(t)}{\underline{t}^{(j)} - t}.
\]
\[
h_1^{(j)}(t) :=
\begin{cases}
\begin{aligned}
&\big( \psi_{i,\M}^{(j)}  - \rho_{i}(t_1^{(j)}) \big) \tanh\!\left( \frac{t - t_1^{(j)}}{\underline{t}^{(j)} - t} \right) + \rho_{i}(t_1^{(j)}), 
\end{aligned} & t < \underline{t}^{(j)} \\
\psi_{i,\M}^{(j)} , & t \geq \underline{t}^{(j)}.
\end{cases}
\]
The adaptive function $\phi_{2}^{(j)}(t)$ is defined as:
\begin{equation}
\phi_{2}^{(j)}(t) := \frac{h^{(j)}_2(t) - \gamma_{i,L}(t)}{{t}_2^{(j)} - t}
\end{equation}
\[
h_2^{(j)}(t) :=
\begin{cases}
\begin{aligned}
&\big( \rho_{i}(t_2^{(j)}) - \psi_{i,\M}^{(j)}  \big) 
\tanh\!\left( \dfrac{t - \overline{t}^{(j)}}{t_2^{(j)} - t} \right) + \psi_{i,\M}^{(j)} , 
\end{aligned} & t < t_2^{(j)} \\
\rho_{i}(t_2^{(j)}), & t \geq t_2^{(j)}
\end{cases}
\]

In the above formulation, $\M \in \{L, U\}$ indicates whether the lower or upper projection of the unsafe set \(\mathcal{U}^{(j)}\) in dimension \(i\) is being used. The projections of the unsafe set are defined as $\psi_{i,L}^{(j)} = \overline{\mathcal{U}}_i^{j}+d_U^{(j)}$ and \(\psi_{i,U}^{(j)} = \underline{\mathcal{U}}_i^{j} - 2\min(d_{S,i}, d_{T,i})-d_U^{(j)}\) where $d_U^{(j)}\in \mathbb{R}^+,\forall j \in[1;n_u]$. Finally, once the lower tube $\gamma_{i,L}(t)$ is modified as per the above framework, the corresponding upper tube can be computed as:
\begin{align}
    \gamma_{i,U}(t) = \gamma_{i,L}(t) + 2\min(d_{S,i}, d_{T,i}),\label{eqn:gamma_add}
\end{align}
where $d_{S,i}$ and $d_{T,i}$ represent the initial and terminal deviations in the $i^{\text{th}}$ dimension. 
\begin{remark}
The adaptive framework described in Equation~\eqref{eqn:gamma_U} operates through three distinct phases. Initially, in the absence of a nearby unsafe set, we have $\alpha_1^{(j)}= 1$, $\alpha_2^{(j)}= 0$, and $\alpha_3^{(j)}= 0$. In this phase, the first term dominates, guiding the tube toward the target set. As the system approaches an unsafe region, that is, during the interval $t\in[t_1^{(j)},\underline{t}^{(j)}]$, the values transition to $\alpha_1^{(j)}= 0$, $\alpha_2^{(j)}= 1$, and $\alpha_3^{(j)}= 0$, thus activating the second term, which modifies the tube to bend safely around the projection of the unsafe set in the selected dimension. Finally, for $t\in[\overline{t}^{(j)},t_2^{(j)}]$, we have $\alpha_1^{(j)}= 0$, $\alpha_2^{(j)}= 0$, and $\alpha_3^{(j)}= 1$, activating the third term to ensure that the modified tube smoothly transitions back to the original reachability margin. This entire adaptive modification process is illustrated in Figure~\ref{fig:main}~(a)–(d).

\end{remark}
 Thus, the reach-and-avoid specification can be enforced by designing a controller such that the state trajectory remains confined within a prescribed tube, expressed by the constraint:
\begin{align}\label{eqn:specs}
\gamma_{i,L}(t) < x_i(t) < \gamma_{i,U}(t), \quad \forall (t, i) \in [0,t_c] \times [1, n].
\end{align}


\begin{theorem}
    The STT $\Gamma(t)=\prod_{i=1,..n}[\gamma_{i,L},\gamma_{i,U}]$ as defined in \eqref{eqn:gamma_U} satisfies the following conditions:
    \begin{enumerate}
        \item[(i)] The tube starts from the initial set  $\mathbf{S}$ i.e., $\Gamma(0)\subset\mathbf{S}$.
        \item[(ii)] The tube reaches the target set within the prescribed time $t_c$: $\Gamma(t_c)\subset\mathbf{T}$.
        \item[(iii)] The tube avoids the unsafe set for all time $\Gamma(t)\cap \mathbf{U}=\emptyset $.
        \item[(iv)] The upper and lower boundary of the tube $\Gamma(t)$ satisfy the following condition: $\gamma_{i.L}(t)<\gamma_{i,U}(t),\forall i\in[1;n],\forall t \in [0,t_c].$
    \end{enumerate}
    \end{theorem}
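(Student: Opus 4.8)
The plan is to first establish two auxiliary facts on which all four items rest, and then verify (i)--(iv) in order. First I would integrate the defining ODE of the reachability margin: using $\tfrac{d}{dt}\tanh\!\big(\tfrac{t}{t_c-t}\big)=\mathrm{sech}^2\!\big(\tfrac{t}{t_c-t}\big)\tfrac{t_c}{(t_c-t)^2}$, the right-hand side equals $(\hat{\underline{\mathbf{T}}}_i-\hat{\underline{\mathbf{S}}}_i)\tfrac{d}{dt}\tanh(\cdot)$, so with $\rho_i(0)=\hat{\underline{\mathbf{S}}}_i$ one gets $\rho_i(t)=\hat{\underline{\mathbf{S}}}_i+(\hat{\underline{\mathbf{T}}}_i-\hat{\underline{\mathbf{S}}}_i)\tanh\!\big(\tfrac{t}{t_c-t}\big)$, increasing monotonically from $\hat{\underline{\mathbf{S}}}_i$ at $t=0$ to $\hat{\underline{\mathbf{T}}}_i$ as $t\to t_c$. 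Second, substituting $s(t)=0.5\tanh(t/v)$ into the weights, I would show that up to the $\tanh$-smoothing (controlled by $v$ and absorbed by the buffer $\dt$) they act as a three-way switch: $(\alpha_1^{(j)},\alpha_2^{(j)},\alpha_3^{(j)})=(1,0,0)$ away from any window, $(0,1,0)$ on $[t_1^{(j)},\underline t^{(j)}]$, $(0,0,1)$ on $[\overline t^{(j)},t_2^{(j)}]$, and crucially $(0,0,0)$ on the intersection interval $[\underline t^{(j)},\overline t^{(j)}]$, so $\dot\gamma_{k,L}=0$ there. Item (iv) is then immediate from \eqref{eqn:gamma_add}, since $\gamma_{i,U}(t)-\gamma_{i,L}(t)=2\min(d_{S,i},d_{T,i})>0$.

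For (i) and (ii) I would combine the boundary values of $\rho_i$ with the phase structure. At $t=0$ no window is active, so $\gamma_{i,L}(0)=\hat{\underline{\mathbf{S}}}_i$ and $\gamma_{i,U}(0)=\hat{\underline{\mathbf{S}}}_i+2\min(d_{S,i},d_{T,i})\le\hat{\overline{\mathbf{S}}}_i$, giving $\Gamma(0)\subseteq\hat{\mathbf{S}}\subset\mathbf{S}$. For (ii), Assumption~\ref{ass:unsafe_temp} ensures the last window closes at some $t_2^{(\cdot)}<t_c$; on $[\overline t^{(j)},t_2^{(j)}]$ the map $h_2^{(j)}$ returns $\gamma_{k,L}$ to $\rho_k(t_2^{(j)})$, after which $\alpha_1^{(j)}=1$ restores $\dot\gamma_{k,L}=\dot\rho_k$, so $\gamma_{i,L}(t_c)=\rho_i(t_c)=\hat{\underline{\mathbf{T}}}_i$ in every dimension and $\Gamma(t_c)\subseteq\hat{\mathbf{T}}\subset\mathbf{T}$ by the analogous width bound with $d_{T,i}$.

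The core is (iii). By Assumption~\ref{ass:unsafe_temp} the windows $[t_1^{(j)},t_2^{(j)}]$ are disjoint, so I argue obstacle-by-obstacle; fix $j$ and $k=i^{(j)}$. For $t\notin[t_1^{(j)},t_2^{(j)}]$ the switch is $(1,0,0)$ and $\gamma_{i,L}\equiv\rho_i$ for all $i$, so the tube is the reachability tube; since $[\underline t^{(j)},\overline t^{(j)}]\subseteq[t_1^{(j)},t_2^{(j)}]$, \eqref{eq:time_bounds} guarantees some dimension lies strictly outside $\llbracket\Obs^{(j)}\rrbracket$ and $\Gamma(t)$ misses $\Obs^{(j)}$. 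For $t\in[\underline t^{(j)},\overline t^{(j)}]$ the switch is $(0,0,0)$ and $\gamma_{k,L}$ is parked at $\psi_{k,\M}^{(j)}$; the definitions $\psi_{k,L}^{(j)}=\overline{\Obs}^{(j)}_k+d_U^{(j)}$ and $\psi_{k,U}^{(j)}=\underline{\Obs}^{(j)}_k-2\min(d_{S,k},d_{T,k})-d_U^{(j)}$ place $[\gamma_{k,L},\gamma_{k,U}]$ strictly above $\overline{\Obs}^{(j)}_k$ ($\M=L$) or strictly below $\underline{\Obs}^{(j)}_k$ ($\M=U$), so $\Gamma(t)\cap\Obs^{(j)}=\emptyset$. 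The remaining buffers $[t_1^{(j)},\underline t^{(j)})$ and $(\overline t^{(j)},t_2^{(j)}]$ are where $\gamma_{k,L}$ is in motion; to close them I need the parking value to be attained, i.e. $\gamma_{k,L}(\underline t^{(j)})=\psi_{k,\M}^{(j)}$, which I obtain from the singular ODE $\dot\gamma_{k,L}=(h_1^{(j)}-\gamma_{k,L})/(\underline t^{(j)}-t)$: writing $e=\gamma_{k,L}-h_1^{(j)}$ yields $\tfrac{d}{dt}[e/(\underline t^{(j)}-t)]=-\dot h_1^{(j)}/(\underline t^{(j)}-t)$, and because $h_1^{(j)}\to\psi_{k,\M}^{(j)}$ with $\dot h_1^{(j)}\to0$ as $t\to\underline t^{(j)}$ the right side is integrable, forcing $e(\underline t^{(j)})=0$; a symmetric analysis of $\phi_2^{(j)},h_2^{(j)}$ covers the return buffer.

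I expect the main obstacle to be the buffer intervals together with this singular tracking. The coefficient $1/(\underline t^{(j)}-t)$ diverges, so I must show the forced error vanishes \emph{exactly} at the endpoint rather than merely staying bounded, which hinges on $\dot h_1^{(j)}\to0$ exponentially (the $\mathrm{sech}^2$ factor dominating the polynomial growth of its argument's derivative). Furthermore, on the buffers $\gamma_{k,L}$ moves while the reachability tube may already overlap $\Obs^{(j)}$ in the remaining dimensions, so I must guarantee the deformation never crosses the obstacle. This requires the side selector $\M$ to be chosen so that $\rho_k(t_1^{(j)})$ and $\psi_{k,\M}^{(j)}$ lie on the same side of the projection $[\underline{\Obs}^{(j)}_k,\overline{\Obs}^{(j)}_k]$ (consistent with the dimension selection in \eqref{eq:argmin_argmax}) and the $\tanh$-based interpolation in $h_1^{(j)},h_2^{(j)}$ to be monotone between these endpoints; finally the buffer $\dt<\Delta$ must absorb the residual error of the smooth switches so that the inactive window $(0,0,0)$ genuinely contains $[\underline t^{(j)},\overline t^{(j)}]$.
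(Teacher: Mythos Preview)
Your proposal follows the same overall architecture as the paper's proof: establish the three-phase switch $(\alpha_1,\alpha_2,\alpha_3)$, read off (i), (ii), (iv) from the endpoint values of $\rho_i$ and \eqref{eqn:gamma_add}, and for (iii) show that in the selected dimension $k$ the lower boundary is parked at $\psi_{k,\M}^{(j)}$ throughout $[\underline t^{(j)},\overline t^{(j)}]$. So the route is essentially the paper's.

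Where you differ is in rigor, and the difference is in your favor. The paper simply asserts that ``substituting into \eqref{eqn:gamma_U} and evaluating at $t=\underline t^{(j)}$'' gives $\gamma_{k,L}(\underline t^{(j)})=\psi_{k,\M}^{(j)}$; it does not justify why the solution of the singular ODE $\dot\gamma_{k,L}=(h_1^{(j)}-\gamma_{k,L})/(\underline t^{(j)}-t)$ actually attains $h_1^{(j)}(\underline t^{(j)})$ at the singular endpoint. Your integrating-factor computation and the observation that $\dot h_1^{(j)}/(\underline t^{(j)}-t)$ is integrable (exponential $\mathrm{sech}^2$ decay beating the $1/(\underline t^{(j)}-t)^3$ blow-up) supplies exactly the missing step. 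Likewise, the paper is silent on the transition buffers $[t_1^{(j)},\underline t^{(j)})$ and $(\overline t^{(j)},t_2^{(j)}]$, during which $\gamma_{k,L}$ is in motion; you correctly note that avoidance there must be argued separately, via the side-consistency of $\M$ and the monotonicity of the $\tanh$ interpolation in $h_1^{(j)},h_2^{(j)}$. These are genuine gaps in the paper's argument that your proposal closes, so you should keep that material rather than trim it to match the paper.
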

    \begin{proof}
     We will prove all four claims in this proof:

        (i) At initial time $t=0$, we have $\alpha_{1}^{(j)}= 1$ and $\alpha_{2}^{(j)}, \alpha_{3}^{(j)} = 0, \forall j\in[1;n_u]$. Therefore, from \eqref{eqn:gamma_U}, for all $i\in [1;n]$:
        \begin{align*}
            &\gamma_{i,L}(0)=\rho_{i}(0)=\underline{\hat{\mathbf{S}}}_i\\
            &\gamma_{i,U}(0)=\gamma_{i,L}(0)+2\min(d_{S,i},d_{T,i})\\
             \implies &\gamma_{i,U}(0) \leq \underline{\hat{\mathbf{S}}}_i + d_{S,i} = \overline{\hat{\mathbf{S}}}_i \\
             \implies &[\gamma_{i,L}(0), \gamma_{i,U}(0)] \subset [\underline{\hat{\mathbf{S}}}_i, \overline{\hat{\mathbf{S}}}_i] \\
             \implies &\bigcap_{i \in [1;n]}[\gamma_{i,L}(0), \gamma_{i,U}(0)] = \Gamma(0) \subset \mathbf{S} = \bigcap_{i\in [1;n]} [\underline{\hat{\mathbf{S}}}_i, \overline{\hat{\mathbf{S}}}_i].
        \end{align*}
        (ii) At final time $t=t_c$, we have $\alpha_{1}^{(j)}= 1$ and $\alpha_{2}^{(j)}, \alpha_{3}^{(j)} = 0, \forall j\in[1;n_u]$. Therefore, from \eqref{eqn:gamma_U}, for all $i\in [1;n]$:
        \begin{align*}
            &\gamma_{i,L}(t_c)=\rho_{i}(t_c)=\underline{\hat{\mathbf{T}}}_i \\
            &\gamma_{i,U}(t_c)=\gamma_{i,L}(t_c)+\min(d_{S,i},d_{T,i})\\
             \implies &\gamma_{i,U}(t_c) \leq \underline{\hat{\mathbf{S}}}_i + d_{T,i} = \overline{\hat{\mathbf{T}}}_i \\
             \implies &[\gamma_{i,L}(t_c), \gamma_{i,U}(t_c)] \subset [\underline{\hat{\mathbf{T}}}_i, \overline{\hat{\mathbf{T}}}_i] \\
             \implies &\bigcap_{i \in [1;n]}[\gamma_{i,L}(t_c), \gamma_{i,U}(t_c)] = \Gamma(t_c) \subset \mathbf{T}= \bigcap_{i\in [1;n]}[\underline{\hat{\mathbf{T}}}_i, \overline{\hat{\mathbf{T}}}_i].
        \end{align*}
         (iii) To show that \(\Gamma(t) \cap \mathbf{U} = \emptyset\) for all \(t \in [0, t_c]\), it suffices to verify that there exists at least one dimension \(i \in [1, n]\) such that, for all \(j \in [1, n_u]\) and \(t \in [\underline{t}^{(j)}, \overline{t}^{(j)}]\), the following holds: \\
        \textbf{Case 1:} If \(\mathcal{M} = L\), then \(\gamma_{i,L}(t) > \overline{\mathcal{U}}^{(j)}_i\). \\
        \textbf{Case 2:} If \(\mathcal{M} = U\), then \(\gamma_{i,U}(t) < \underline{\mathcal{U}}^{(j)}_i\). \\
        For \(t \in [t^{(j)}_1, \underline{t}^{(j)}]\), we have \(\alpha_1^{(j)},\alpha_3^{(j)} = 0\) and \(\alpha_2^{(j)} = 1\). Substituting into Equation~\eqref{eqn:gamma_U} and evaluating at \(t = \underline{t}^{(j)}\), we obtain: \\
        \textbf{Case 1:} \(\gamma_{i,L}(\underline{t}^{(j)}) = \psi_{i,L}^{(j)} = \overline{\mathcal{U}}_i^{(j)} + d_U^{(j)} > \overline{\mathcal{U}}_i^{(j)}\). \\
        \textbf{Case 2:} \(\gamma_{i,U}(\underline{t}^{(j)}) = \psi_{i,U}^{(j)} = \underline{\mathcal{U}}_i^{(j)} - d_U^{(j)} < \underline{\mathcal{U}}_i^{(j)}\). \\
        Furthermore, since \(\dot{\gamma}_{i,L}(t) = 0\) for \(t \in [\underline{t}^{(j)}, \overline{t}^{(j)}]\), the function \(\gamma_{i,L}(t)\) (or \(\gamma_{i,U}(t)\)) remains constant over this interval. Hence, the required condition holds for the entire duration \(t \in [\underline{t}^{(j)}, \overline{t}^{(j)}]\).
         \\
    (iv) The final part of the proof follows directly from \eqref{eqn:gamma_add}, as $ \min(d_{S,i}, d_{T,i}) \in \mathbb{R}^+ $, allowing us to conclude that:
    \begin{align*}
        \gamma_{i,U}(t) &= \gamma_{i,L}(t) + 2\min(d_{S,i}, d_{T,i}) \\
        &> \gamma_{i,L}(t)<\gamma_{i,U},\forall i \in[1;n] , \forall t\in[0,t_c].
    \end{align*}
\end{proof}

\begin{lemma}\label{lemma1}
    The functions $\gamma_{i,L}(t)$, $\gamma_{i,U}(t)$ and their derivatives are continuous and bounded for all $t \in [0, t_c]$.
\end{lemma}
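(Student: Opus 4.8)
The plan is to reduce everything to a single claim: that $\dot\gamma_{i,L}$ is continuous and bounded on $[0,t_c]$. Once this is in hand, $\gamma_{i,L}(t)=\gamma_{i,L}(0)+\int_0^t\dot\gamma_{i,L}(\tau)\,d\tau$ is the integral of a continuous bounded function, hence $\gamma_{i,L}\in\mathcal{C}^1$ and is bounded on the finite interval $[0,t_c]$; and by \eqref{eqn:gamma_add} the function $\gamma_{i,U}=\gamma_{i,L}+2\min(d_{S,i},d_{T,i})$ inherits the same regularity and bounds, with $\dot\gamma_{i,U}=\dot\gamma_{i,L}$. For $i\neq k$ we simply have $\dot\gamma_{i,L}=\dot\rho_i$, so the content lies entirely in the three building blocks $\dot\rho_i$, the weights $\alpha_\ell^{(j)}$, and the adaptive terms $\phi_1^{(j)},\phi_2^{(j)}$ appearing in \eqref{eqn:gamma_U}, which I would treat in turn.

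For the reachability margin I would first integrate the defining ODE to the closed form $\rho_i(t)=(\hat{\underline{\mathbf{T}}}_i-\hat{\underline{\mathbf{S}}}_i)\tanh\!\big(\tfrac{t}{t_c-t}\big)+\hat{\underline{\mathbf{S}}}_i$, which is immediate from $\tfrac{d}{dt}\tanh\!\big(\tfrac{t}{t_c-t}\big)=\operatorname{sech}^2\!\big(\tfrac{t}{t_c-t}\big)\tfrac{t_c}{(t_c-t)^2}$. Since $\tanh$ is bounded, $\rho_i$ is bounded and extends continuously to $t=t_c$ with value $\hat{\underline{\mathbf{T}}}_i$. For $\dot\rho_i$ the only concern is $t\to t_c^-$, where the polynomial blow-up $(t_c-t)^{-2}$ competes with $\operatorname{sech}^2\!\big(\tfrac{t}{t_c-t}\big)$; writing $s=t_c-t\to0^+$ the argument tends to $+\infty$ and $\operatorname{sech}^2$ decays like $e^{-2t_c/s}$ up to polynomial factors, which dominates $s^{-2}$, so $\dot\rho_i\to0$, matching the $t\ge t_c$ branch. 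Hence $\dot\rho_i$ is continuous and bounded. The weights $\alpha_1^{(j)},\alpha_2^{(j)},\alpha_3^{(j)}$ are finite sums of time-shifts of $s(t)=0.5\tanh(t/v)$, which is $\mathcal{C}^\infty$ and bounded in $(-0.5,0.5)$; therefore the weights are smooth and uniformly bounded, and by Assumption~\ref{ass:unsafe_temp} the active windows of distinct $j$ do not overlap, so at most one triple is away from its inactive value at any time.

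The crux, and the step I expect to be the main obstacle, is the boundedness of $\phi_1^{(j)}$ and $\phi_2^{(j)}$, whose denominators $\underline{t}^{(j)}-t$ and $t_2^{(j)}-t$ vanish at $t=\underline{t}^{(j)}$ and $t=t_2^{(j)}$. I would show these are removable singularities rather than genuine blow-ups. Consider $\phi_1^{(j)}$ on $[t_1^{(j)},\underline{t}^{(j)})$, where the weights reduce the dynamics to the scalar linear ODE $\dot\gamma_{i,L}=(h_1^{(j)}-\gamma_{i,L})/(\underline{t}^{(j)}-t)$. Two facts drive the argument: first, from the $\tanh$ form of $h_1^{(j)}$ the quantities $h_1^{(j)}(t)-\psi_{i,\mathcal{M}}^{(j)}$ and $\dot h_1^{(j)}(t)$ both vanish as $t\to\underline{t}^{(j)}$, and in fact exponentially fast, because the argument $\tfrac{t-t_1^{(j)}}{\underline{t}^{(j)}-t}\to+\infty$ forces $\operatorname{sech}^2$ to dominate the algebraic factor $(\underline{t}^{(j)}-t)^{-2}$; second, substituting $s=\underline{t}^{(j)}-t$ and solving the ODE with integrating factor $s^{-1}$, the exponential smallness of the forcing $h_1^{(j)}-\psi_{i,\mathcal{M}}^{(j)}$ makes the resulting integral converge and gives $\gamma_{i,L}(t)-\psi_{i,\mathcal{M}}^{(j)}=O(s)$. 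Consequently the numerator $h_1^{(j)}(t)-\gamma_{i,L}(t)$ vanishes at least linearly in $s$, so $\phi_1^{(j)}=\dot\gamma_{i,L}$ tends to a finite limit as $t\to\underline{t}^{(j)}$, and $\gamma_{i,L}(\underline{t}^{(j)})=\psi_{i,\mathcal{M}}^{(j)}$, as used in the proof of claim (iii). The identical computation with $s=t_2^{(j)}-t$ handles $\phi_2^{(j)}$ at $t_2^{(j)}$. Away from these two points both denominators are bounded away from zero, so $\phi_1^{(j)},\phi_2^{(j)}$ are continuous and bounded there as well.

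Finally I would assemble the pieces: on each time window $\dot\gamma_{i,L}$ is a finite linear combination, with smooth bounded coefficients $\alpha_\ell^{(j)}$, of the continuous bounded functions $\dot\rho_i$, $\phi_1^{(j)}$, $\phi_2^{(j)}$, and the one-sided limits agree at the junction points $t_1^{(j)},\underline{t}^{(j)},\overline{t}^{(j)},t_2^{(j)}$ precisely because the $\alpha_\ell^{(j)}$ switch smoothly (the buffer $\dt<\Delta$ guarantees the transitions complete inside each window). Hence $\dot\gamma_{i,L}\in\mathcal{C}$ and is bounded on $[0,t_c]$, and the reduction in the first paragraph closes the argument for both $\gamma_{i,L}$ and $\gamma_{i,U}$.
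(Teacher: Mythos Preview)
Your proposal follows the same high-level decomposition as the paper's own proof: show that $\dot\rho_i$, the weights $\alpha_\ell^{(j)}$, and the adaptive terms $\phi_1^{(j)},\phi_2^{(j)}$ are each continuous and bounded on $[0,t_c]$, and then combine via sums and products. Where you differ is in actually doing the work at the hardest step. The paper simply asserts that ``the functions $\dot\rho_i(t)$, $\phi_1^{(j)}(t)$, and $\phi_2^{(j)}(t)$ are also designed to be continuous and bounded'' and moves on, never mentioning the vanishing denominators $\underline{t}^{(j)}-t$ and $t_2^{(j)}-t$. Your removable-singularity analysis---solving the reduced linear ODE with integrating factor $s^{-1}$ and using the exponential decay of $h_1^{(j)}-\psi_{i,\mathcal{M}}^{(j)}$ to show the numerator vanishes linearly---is precisely what is required to justify that assertion, and the exponential-versus-polynomial competition you invoke for both $\dot\rho_i$ at $t_c$ and $\phi_\ell^{(j)}$ at the critical times is the correct mechanism.

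One small caveat worth recording: your reduction to the clean ODE $\dot\gamma_{i,L}=\phi_1^{(j)}$ on $[t_1^{(j)},\underline{t}^{(j)})$ treats the $\tanh$-smoothed weights $\alpha_\ell^{(j)}$ as already at their plateau values $0$ or $1$. That is only asymptotically true for $v\ll\dt$; in general there is a bounded residual $\alpha_1^{(j)}\dot\rho_i+\alpha_3^{(j)}\phi_2^{(j)}$ (both factors bounded near $\underline{t}^{(j)}$, since the singularity of $\phi_2^{(j)}$ lies at $t_2^{(j)}>\overline{t}^{(j)}$) and a coefficient $\alpha_2^{(j)}$ not exactly equal to $1$. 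Carrying these perturbations through the integrating-factor estimate does not change the conclusion, but a fully rigorous write-up should acknowledge them. The paper's proof does not address this either.
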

\begin{proof}
  For all dimensions $i \neq k$, the tube boundaries $\gamma_{i,L}(t)$ and $\gamma_{i,U}(t)$, along with their derivatives $\dot\gamma_{i,L}(t)$ and $\dot\gamma_{i,U}(t)$, are continuous and bounded since they are constructed using the hyperbolic tangent function $\tanh$, which is smooth. For the critical dimension $i = k$, each term in \ref{eqn:gamma_U} is a product of functions that are assumed to be continuously differentiable and bounded. Specifically, the weighting functions $\alpha_m^{(j)},m\in\{1,2,3\}$ are constructed for smooth activation profiles and are therefore $\mathcal{C}^1$ and bounded. The functions $\dot\rho_{i}(t)$, $\phi_1^{(j)}(t)$, and $\phi_2^{(j)}(t)$ are also designed to be continuous and bounded. Since sums and products of continuous and bounded functions remain continuous and bounded, it follows that $\dot\gamma_{i,L}(t)$ is continuous and bounded and thus ${\gamma}_{i,L}(t)$ is continuous and bounded for $t\in[0,t_c]$ and also from \eqref{eqn:gamma_add} $\dot\gamma_{i,U}(t)$ and $\dot\gamma_{i,U}(t)$ are bounded and continuous for $t\in[0,t_c]$.
\end{proof}

\section{Controller Design}\label{sec:control}

In this section, we utilize the STTs described in \eqref{eqn:specs} to construct an explicit, approximation-free control law. Our method follows the structure presented in \cite{STT} and ensures compliance with the state constraints specified in \eqref{eqn:specs}.

We begin by introducing the normalized error $e(x,t)$, the transformed error $\varepsilon(x,t)$, and the diagonalized matrix $\xi(x,t)$ as defined below:
\begin{align}
    e(x,t) &= [e_1(x_1,t),\ldots,e_n(x_n,t)]^\top = \gamma_d^{-1}(t)(2x - \gamma_s(t)), \notag \\
    \varepsilon(x,t) &= \left[\ln\left(\frac{1 + e_1(x_1,t)}{1 - e_1(x_1,t)}\right), \ldots, \ln\left(\frac{1 + e_n(x_n,t)}{1 - e_n(x_n,t)}\right)\right]^\top, \nonumber \\
    \xi(x,t) &= 4\text{diag} \Big( \Big[ \frac{1}{\gamma_{1,d}(1-e_1^2(x_1,t))}, \ldots,  \frac{1}{\gamma_{n,d}(1-e_n^2(x_n,t))} \Big] \Big),
\end{align}
where $\gamma_s := [\gamma_{1,U} + \gamma_{1,L}, \ldots, \gamma_{n,U} + \gamma_{n,L}]$ and $\gamma_d := \operatorname{diag}(\gamma_{1,d}, \ldots, \gamma_{n,d})$, with $\gamma_{i,d} = \gamma_{i,U} - \gamma_{i,L}$.

\begin{theorem}
Consider the nonlinear control-affine system $S$ defined in \eqref{eqn:sysdyn}. Suppose the initial condition $x(0)$ lies within the prescribed STTs, i.e., $\gamma_{i,L}(0) < x_i(0) < \gamma_{i,U}(0)$ for all $i \in [1;n]$. Then, the following closed-form control law:
\begin{align}\label{eq:control}
    u(x,t) = -\kappa \xi(x,t)\varepsilon(x,t), \quad \kappa \in \mathbb{R}^+
\end{align}
\end{theorem}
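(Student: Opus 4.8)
The plan is to recast the tube-containment requirement \eqref{eqn:specs} as a boundedness property of the transformed error $\varepsilon(x,t)$, and then to establish that boundedness through a Lyapunov argument combined with a no-finite-escape-time contradiction. The starting observation is that the normalized error satisfies $e_i(x_i,t)\in(-1,1)$ if and only if $\gamma_{i,L}(t)<x_i(t)<\gamma_{i,U}(t)$, and that the logarithmic transformation $\varepsilon_i=\ln\frac{1+e_i}{1-e_i}$ is a strictly increasing bijection from $(-1,1)$ onto $\R$. Hence $\varepsilon$ remains finite exactly as long as the state stays strictly inside the tube, and it suffices to show that $\norm{\varepsilon(x(t),t)}$ cannot escape to infinity on $[0,t_c]$. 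Note also that, by construction, $\xi(x,t)$ is precisely the Jacobian $\partial\varepsilon/\partial x$, which is what makes the feedback $u=-\kappa\xi\varepsilon$ act directly in the transformed coordinates.

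First I would establish local well-posedness. Consider the open set $\Omega=\{(x,t): e_i(x_i,t)\in(-1,1),\ i\in[1;n]\}$, which by hypothesis contains $(x(0),0)$. On $\Omega$ the closed-loop vector field $f(x)-\kappa g(x)\xi(x,t)\varepsilon(x,t)+w$ is locally Lipschitz in $x$ and bounded and continuous in $t$: $f,g$ are locally Lipschitz by Assumption~\ref{assum:lip}, the tube boundaries and their derivatives are continuous and bounded by Lemma~\ref{lemma1}, and $\varepsilon,\xi$ are smooth on $\Omega$. By the Picard--Lindel\"of theorem there is a unique maximal solution defined on a maximal interval $[0,\tau_{\max})$ with $(x(t),t)\in\Omega$ throughout.

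Next I would run the Lyapunov estimate on $[0,\tau_{\max})$. Taking $V=\tfrac12\varepsilon^\top\varepsilon$ and differentiating along the closed loop gives
\begin{align*}
\dot V = \varepsilon^\top\xi\,(f(x)+w) + \varepsilon^\top\frac{\partial\varepsilon}{\partial t} - \kappa\,(\xi\varepsilon)^\top g(x)\,(\xi\varepsilon).
\end{align*}
The decisive term is the last one: since $\xi$ is diagonal and hence symmetric, Assumption~\ref{assum:pd} yields $(\xi\varepsilon)^\top g(x)(\xi\varepsilon)=(\xi\varepsilon)^\top g_s(x)(\xi\varepsilon)\ge\underline{g}\norm{\xi\varepsilon}^2$. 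For the remaining terms I would use that, while the state is confined to the compact tube cross-section, $f$ is bounded and $w$ is bounded by assumption, and that $\partial\varepsilon/\partial t$ is controlled by the bounded tube derivatives of Lemma~\ref{lemma1}; each such singular factor is of the order $\xi_{ii}$, so that $\varepsilon^\top\xi(f+w)$ and $\varepsilon^\top\partial_t\varepsilon$ grow at most like $\norm{\xi\varepsilon}$. Consequently $\dot V\le c_1\norm{\xi\varepsilon}-\kappa\underline{g}\norm{\xi\varepsilon}^2$ for a constant $c_1$, which is negative whenever $\norm{\xi\varepsilon}$ is large. This shows $\norm{\varepsilon}$ is bounded on $[0,\tau_{\max})$ by a constant depending only on the data and $\norm{\varepsilon(x(0),0)}$.

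Finally I would close the argument by contradiction. If $\tau_{\max}\le t_c$ were finite, the maximal solution would have to leave every compact subset of $\Omega$, forcing some $e_i(t)\to\pm1$ and hence $\norm{\varepsilon}\to\infty$ as $t\to\tau_{\max}$, contradicting the uniform bound just obtained. Therefore the solution extends over all of $[0,t_c]$ (and beyond, since the tube is static for $t\ge t_c$) with $e_i(t)\in(-1,1)$ throughout, which is exactly \eqref{eqn:specs}. Combined with parts (i)--(iii) of the preceding theorem this gives $x(0)\in\mathbf{S}$, $x(t_c)\in\mathbf{T}$, and $x(t)\notin\mathbf{U}$ for all $t$, i.e.\ the prescribed-time reach--avoid--stay task is met. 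I expect the main obstacle to be the growth-rate bookkeeping in the Lyapunov step: one must verify that the potentially singular terms $\varepsilon^\top\xi(f+w)$ and $\varepsilon^\top\partial_t\varepsilon$ really are dominated by the quadratic $\norm{\xi\varepsilon}^2$ as the error approaches the tube boundary, which requires a compactness argument to bound $f$ and $w$ uniformly over the admissible region together with an estimate showing that $\partial\varepsilon/\partial t$ scales no worse than $\xi$.
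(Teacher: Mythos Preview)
Your proposal is correct and is essentially the approach the paper takes: the paper's own proof consists of a single sentence deferring to Theorem~1 of \cite{STT}, and the argument there is precisely the Lyapunov-plus-maximal-interval scheme you have spelled out (normalized error $e$, logarithmic transformation $\varepsilon$, the candidate $V=\tfrac12\varepsilon^\top\varepsilon$, the quadratic damping $-\kappa\underline g\norm{\xi\varepsilon}^2$ dominating the at-most-linear growth of $\varepsilon^\top\xi(f+w)$ and $\varepsilon^\top\partial_t\varepsilon$, followed by a contradiction on $\tau_{\max}$). Your identification of the delicate step---showing that $\partial\varepsilon/\partial t$ scales like $\xi$ so that all nuisance terms are $O(\norm{\xi\varepsilon})$---is exactly the place where Lemma~\ref{lemma1} and the lower bound on $\gamma_{i,d}$ from part~(iv) of the preceding theorem are invoked.
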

 ensures the state trajectory satisfies the constraint in \eqref{eqn:specs}, steering $x(t)$ to the target set $\T$ within finite time $t_c$, while safely avoiding the unsafe region $\mathbf{U}$ and $\kappa$ is the user defined control gain.

\begin{proof}
    Following the proof of Theorem 1 in \cite{STT}, the control law in Equation \eqref{eq:control} ensures that the system trajectory remains within the STTs over $[0, t_c]$. 
\end{proof}

\begin{remark}
    It is important to highlight that the proposed control law is approximation-free and ensures the satisfaction of the RAS specification for control-affine systems with unknown dynamics. Moreover, if $g_s(x)$ is negative definite, the control gain $k$ must belong to the set $\mathbb{R} \setminus \mathbb{R}_0^+$.
\end{remark}

\begin{algorithm}
\caption{Adaptive Framework for STT}
\begin{algorithmic}[1]\label{Alg1}
\State \textbf{Input:} Initial state $x(0)$, target set $\T$, unsafe set $\mathbf{U} = \{\mathcal{U}_1, \ldots, \mathcal{U}_{n_u}\}$, $t_c$
\State \textbf{Output:} Control input $u(x(0), \T, \mathbf{U}, x(t), t)$ for all $t \in [0, t_c]$ in order to satisfy the given specification

\While{$t \neq t_c$}
   \State Determine $\mathbb{T}_{int}^{(j)}=[\underline{t}^{(j)},\overline{t}^{(j)}],$ {$ \forall     \text{ }\mathcal{U}_j \in \mathbf{U}$} as shown in \eqref{eq:time_bounds}
    \State Select the $\mathbb{T}_{int}^{(j)}$ with the smallest $\underline{t}^{(j)}$.
    \State Choose dimension $i^j \in \{1, \ldots, n\}$ using \eqref{eq:argmin_argmax}.
    \State Apply the adaptive framework to compute the STTs $\{\gamma_{i,L}(t),\gamma_{i,U}(t)\},\forall i\in[1;n]$ as given in \eqref{eqn:gamma_U}.
    \State Apply control law based on the STTs according to \eqref{eq:control}.
    
\EndWhile
\end{algorithmic}
\end{algorithm}

\section{Case Studies} \label{sec:case}

To demonstrate the effectiveness of the proposed approach, we present a case study in a 2-D arena simulated in \textsc{Matlab} with the omnidirectional robot dynamics adopted from \cite{jagtap2024controller}.
\begin{align}\label{eq:omni_dyn}
\begin{bmatrix}
\dot{x}_1 \\
\dot{x}_2 \\
\dot{x}_3
\end{bmatrix}
&=
\begin{bmatrix}
\cos x_3 & -\sin x_3 & 0 \\
\sin x_3 & \cos x_3  & 0 \\
0        & 0         & 1
\end{bmatrix}
\begin{bmatrix}
v_1 \\
v_2 \\
\omega
\end{bmatrix}
+ w(t).
\end{align}
Here, the state vector $[x_1, x_2, x_3]^T$ represents the robot’s pose, while $[v_1, v_2, \omega]^T$ denotes the velocity inputs. The term $w(t)$ denotes an unknown but bounded external disturbance. 
The robot navigates in a 2-D arena with multiple obstacles $\mathcal{U}^1 = [1.5, 2] \times [0.5, 3]$, $\mathcal{U}^2 = [5.2, 6.8] \times [3.2, 4]$ and $\mathcal{U}^3 = [7, 8] \times [0, 8]$, as illustrated in Figure~\ref{fig:main_t}. The robot's objective is to reach the target $\mathbf{T} = [11, 11.5] \times [7, 7.5]$ from the initial region $\mathbf{S} = [0, 0.5] \times [0, 0.5]$, within a prescribed time $t_c = 80$s, while safely avoiding the obstacles. The system trajectory along with the STTs is illustrated in Figure~\ref{fig:main_t}. Figure~\ref{fig:control} compares the control effort of the proposed smooth tube synthesis method with the circumvent-based approach \cite{STT}, highlighting a significant reduction in control effort achieved by our method.
    
    


\begin{figure*}
    \centering
    \includegraphics[width=\linewidth]{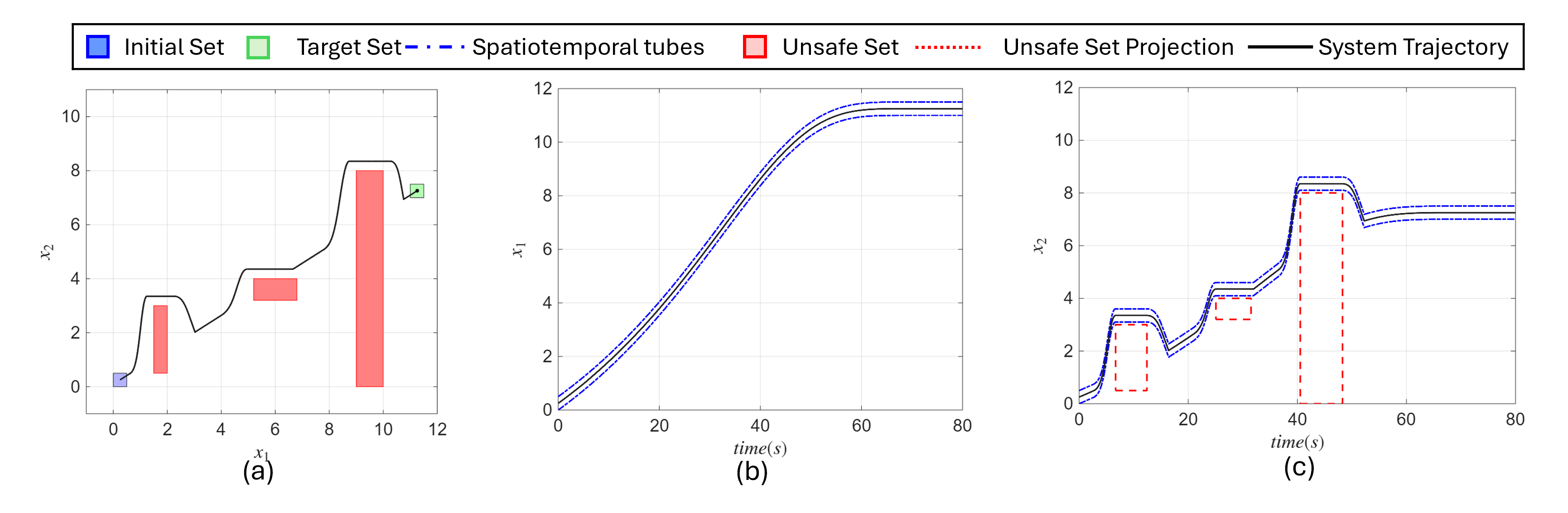}
    \caption{Simulation results showing the system trajectory (in black solid lines) followed by a robot with omnidirectional dynamics along with the STTs in dimension $x_1$ and $x_2$.}
    \label{fig:main_t}
\end{figure*}

\begin{figure}
    \centering
    \includegraphics[width=0.6\linewidth]{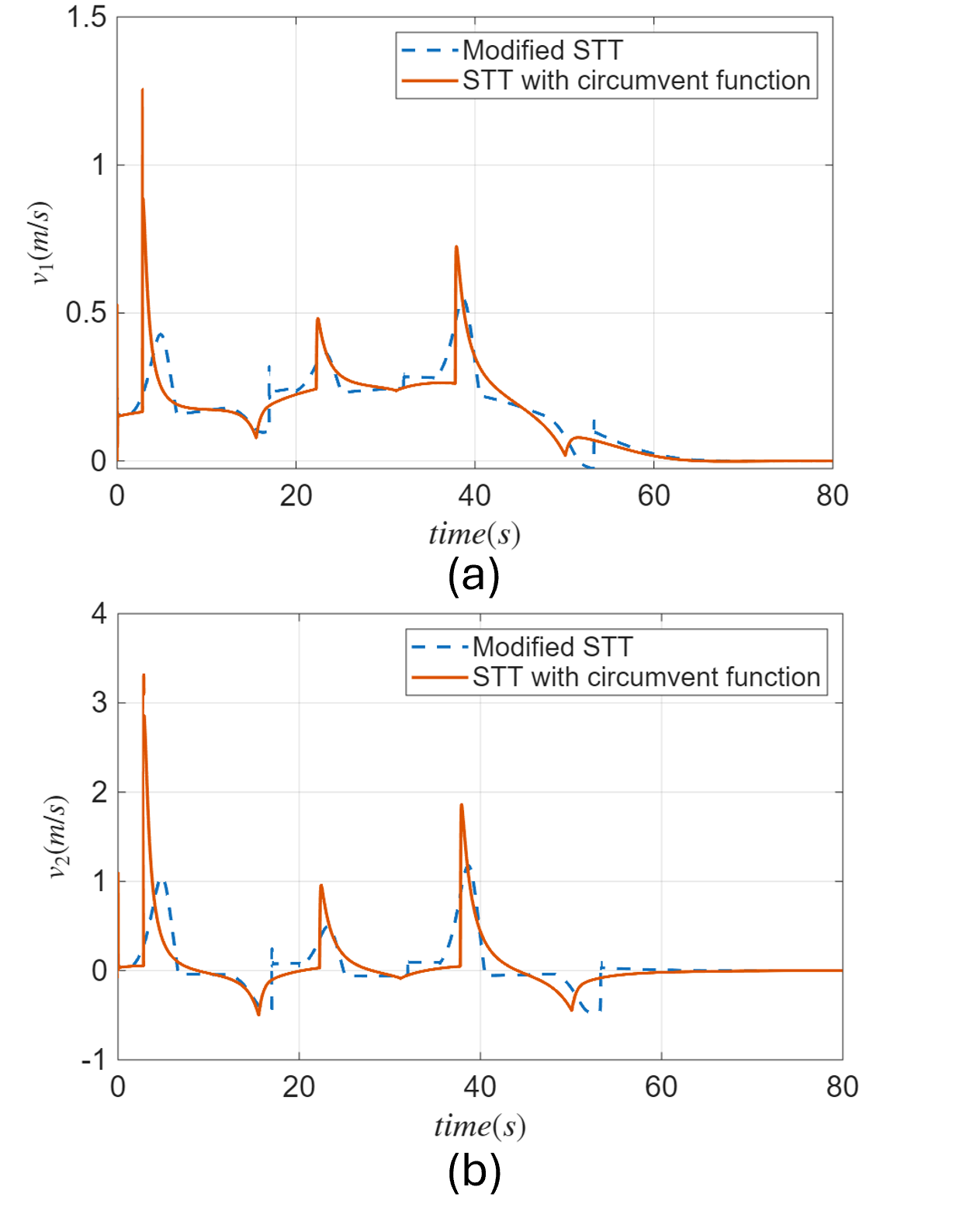}
    \caption{Comparing control effort of STT with circumvent function and modified STT framework  }
    \label{fig:control}
\end{figure}
    
\section{Conclusion}\label{sec:conc}

In this work, we presented a modified  STT-based framework to address the prescribed-time reach-avoid-stay control problem for systems. We have shown through a comparison study that our proposed framework enables smooth and efficient handling of avoid constraints and helps to lower the control effort when compared to circumvent based approach. We have also shown that the STTs designed by our proposed framework satisfies the given PT-RAS specification and along with that an closed-form, approximation-free feedback control law is also presented for control-affine systems with unknown dynamics, ensuring that system trajectories remain within the constructed tubes and satisfy the specified temporal and safety requirements. The effectiveness of the proposed method was validated through a robot navigation case study.


While the current work assumes that the unsafe sets are static and known, future research will focus on extending the framework to handle time-varying obstacles in dynamic environments. We also aim to generalize the method to broader classes of nonlinear systems.

\bibliographystyle{unsrt} 
\bibliography{sources} 

\end{document}